\theoremstyle{plain}
\newtheorem{theorem}{Theorem}[section]
\newtheorem{lemma}[theorem]{Lemma}
\newtheorem{proposition}[theorem]{Proposition}
\newtheorem{corollary}[theorem]{Corollary}
\theoremstyle{definition}
\newtheorem{definition}[theorem]{Definition}
\newtheorem{remark}[theorem]{Remark}
\newtheorem{example}[theorem]{Example}
\def\Fq{{\mathbb F}_q}
\def\Fqm{{\mathbb F}_{q^m}}
\newcommand{\MRMMP}{\operatorname{MRMMP}}
\newcommand{\MRMM}{\operatorname{MRMM}}
\newcommand{\GL}{\operatorname{GL}}
\newcommand{\diag}{\operatorname{diag}}
\newcommand{\GF}{\operatorname{GF}}
\begin{document}

\title[The multiple-recursive matrix method]
{A note on the multiple-recursive matrix method for generating pseudorandom vectors}
\author{Susil Kumar Bishoi}
\address{Center for Artificial Intelligence and Robotics, Defence Research and Development Organisation, CV Raman Nagar, Bengaluru 560093, India}
\email{skbishoi@gmail.com}

\author{Himanshu Kumar Haran}
\address{Center for Artificial Intelligence and Robotics, Defence Research and Development Organisation, CV Raman Nagar, Bengaluru 560093, India}
\email{himanshu.haran@gmail.com}

\author{Sartaj Ul Hasan}
\address{Scientific Analysis Group, Defence Research and Development
Organisation, Metcalfe House, Delhi 110054, India}
\email{sartajulhasan@gmail.com}

\keywords{Linear feedback shift register; multiple-recursive matrix method; 
linear complexity; 
Langford arrangement.}

\subjclass[2010]{94A55, 94A60, 15B33, 12E20 and 12E05.}

\begin{abstract}
The multiple-recursive matrix method for generating pseudorandom vectors was introduced by Niederreiter (Linear Algebra Appl. 192 (1993), 301-328). We propose an algorithm for finding an efficient primitive multiple-recursive matrix method. Moreover, for improving the linear complexity, we introduce a tweak on the contents of the primitive multiple-recursive matrix method. 
%We also introduce the notion of tweaked primitive multiple-recursive matrix method with improved linear complexity.
\end{abstract}
\date{\today}
\maketitle

\section{Introduction}\label{intro}
In most of the modern stream ciphers, we generally use linear feedback shift registers (LFSRs) as basic building blocks that produce only one new bit per step. Such ciphers are often referred to as bit-oriented. 
Bit-oriented ciphers not only have large period and good statistical properties, but also have low cost of implementation in hardware and thus are quite useful in applications like wireless communications. However, 
in many situations such as high speed link encryption, an efficient software encryption is required and bit-oriented ciphers do not provide adequate efficiency. %The reason for their less software efficiency is that 
%many bit manipulations are required in order to produce just a single bit per clock. It may be remarked that in certain cases one may use LFSRs with only two tap points just to make their software implementation faster.%, but this is not always feasible. 

The question arises: How to design feedback shift registers (FSRs) that output a word instead of a bit per clock? A very natural and obvious way is to consider FSRs over extension fields, but then field multiplication 
being an expensive operation, it would not really make our life easy in terms of software efficiency. The other way is to exploit word operations -- logic operations and arithmetic operations -- of modern computer processors 
in designing FSRs so as to enhance their efficiency in software implementation. In fact, Preneel in \cite{BP} poses the question of whether one can design fast and secure FSRs with the help of the word operations of 
modern processors and the techniques of parallelism. 

Interestingly, a solution to Preneel's problem was already available in the literature even before it was formally stated and was given by Niederreiter in a series of papers \cite{N1,N2,N3,N4} by introducing 
the multiple-recursive matrix method (MRMM) for generating pseudorandom vectors. This method involves matrix multiplication which is again an expensive operation as far as software efficiency is concerned. Zeng {\it et al.} \cite{Zeng} 
resolve this problem by imposing restriction on the choice of matrices used in the multiple-recursive matrix method. In fact, Zeng {\it et al.} \cite{Zeng} introduce the notion of word-oriented $\sigma$-LFSR and it turns out that the 
seemingly new notion of $\sigma$-LFSR is essentially equivalent to the multiple-recursive matrix method for generating pseudorandom vectors. They also gave a conjectural formula for the number of primitive $\sigma$-LFSRs. 
This conjecture has been proved in the affirmative and the reader is referred to \cite{CT,GSM,GR,GR1,KP1,KP2} for more details. Throughout this paper, we shall use the acronym MRMM instead of $\sigma$-LFSR.

It may be noted that Tsaban and Vishne \cite{TV} also addressed the problem of Preneel by introducing the notion of transformation shift registers (TSR). It turns out that TSR is a special case of the multiple-recursive 
matrix method. One may refer to \cite{CHPW,HPW,JY,Ram} for some recent progress concerning TSRs.

By choosing matrices at random from a special set of matrices that are compatible with word operations of modern processors, a search algorithm for finding some efficient primitive MRMM was proposed in \cite[Algorithm 1]{Zeng}. 
We would like to reiterate here that by efficient, we mean that the matrices used in  MRMM can simply be replaced by word operations while computing the feedback. %Although the search algorithm \cite[Algorithm 1]{Zeng} produces efficient primitive MRMM, it has a smaller search space. Moreover, the complexity of this algorithm increases significantly for larger values of the word size and the length of the MRMM. Thus, 
For all practical purposes where software efficiency is of paramount importance, we need %a faster 
an algorithm for explicitly constructing efficient primitive MRMM. % with a relatively bigger set of choices. 
It may be noted that a method for constructing primitive MRMM can be gleaned from the proof of \cite[Theorem 6.1]{GSM}, but it neither constructs all the primitive MRMM nor does it produce an efficient primitive MRMM. Lachaud \cite{Gilles} and Krishnaswamy {\it et al.} \cite{KP2} also proposed nice methods for constructing all of the primitive MRMM. %These constructions produce all the primitive $\
%sigma$-LFSR 
%or MRMM. 
In this paper, however, we are not really focusing on constructing all of the primitive MRMM. In fact, we are interested in constructing only some efficient ones so that the problem of software efficiency in various applications is resolved. Very recently, M. A. Goltvanitsa has drawn our attention to \cite[Section~2.2]{GNZ} which also discusses a similar construction for non-linearized skew MP-polynomials. However, it appears that our techniques are completely different from those discussed in \cite{GNZ}. % in  but using different techniques. %The main idea of our construction is to use a suitable version of Horner's method for a given polynomial. 
%due to their applicability in software-oriented stream ciphers.% is to give an algorithm for constructing only efficient primitive $\sigma$-LFSRs.primitive multiple-recursive matrix method for generating pseudorandom vectors.

 As we know, linear complexity plays a crucial role in determining the security of the keystream generated by FSRs. %The sequences generated by multiple-recursive matrix method are prone to attacks based on algebraic techniques 
 %due to the inherent linearity. %That is why it is not at all a good idea to use multiple-recursive matrix method directly for generating keystream sequences. 
 In order to enhance linear complexity of sequences generated by the multiple-recursive matrix method, 
 one might consider employing some nonlinear functions on its contents. In fact, in \cite{HPW}, a nonlinear scheme based on Langford arrangement was employed on sequences generated by primitive TSRs. We replicate a similar, yet slightly different tweak, for the sequences generated by primitive MRMM along the similar lines. Since this ``little tweak'' has not yet been reported in the literature, we thought of including it in the form of the tweaked primitive multiple-recursive matrix method for the sake of completeness.

The paper is organized as follows. In Section \ref{secmrmm}, we recall some definitions and results concerning the multiple-recursive matrix method that are needed in this work. We develop some mathematical theory for constructing 
efficient MRMM in Section \ref{construction}. %and propose an algorithm for constructing efficient primitive MRMM in Section \ref{construction}. 
We propose an algorithm for finding efficient primitive MRMM in Section \ref{algorithm}. 
In Section \ref{implementation}, we discuss implementation issues of MRMM obtained through our 
algorithm. %  and provide some experimental results. 
Finally, in Section \ref{nmrmm}, we discuss the tweaked primitive MRMM based on Langford arrangement.% and study its component-wise linear complexity.
\section {The Multiple-Recursive Matrix Method} \label{secmrmm}
We denote by $\Fq$ the finite field with $q$ elements,
where $q$ is a prime power and by $\Fq[X]$ the ring of polynomials in one variable $X$
with coefficients in $\Fq$. %For every set $S$, we denote by
%$|S|$, the cardinality of the set $S$. 
Also we denote by
$M_d(\Fq)$ the set of all $d\times d$ matrices with entries
in $\Fq$. We now recall some definitions and
results from \cite{GSM,N2} concerning the multiple-recursive matrix method.

In what follows, we fix positive
integers $m$ and $n$, and a vector space basis $\{\alpha_0,
\dots, \alpha_{m-1}\}$ of ${\mathbb F}_{q^m}$ over
$\mathbb F_q$. Given any $ s\in {\mathbb F}_{q^m}$, there are
unique $s_0, \dots, s_{m-1} \in {\mathbb F}_{q}$ such that
$s= s_0 \alpha_0 + \cdots + s_{m-1}\alpha_{m-1}$,
and we shall denote the corresponding co-ordinate vector
$(s_0, \dots, s_{m-1})$ of $s$ by $\mathbf{s}$. Evidently,
the association $s\longmapsto \mathbf{s}$ gives a vector
space isomorphism of $\mathbb F_{q^m}$ onto $\mathbb F_q^m$.
Elements of $\mathbb F_q^m$ may be thought of as column vectors
and so $\, C\mathbf{s}$ is a well-defined element of
$\mathbb F_q^m$ for any $\mathbf{s} \in \mathbb F_q^m$ and
$C\in M_m(\Fq)$.

\begin{definition}%[\cite{Zeng}]
\label{def:sigmalfsr}
Let $C_0, C_1, \dots, C_{n-1} \in M_m(\mathbb F_q)$. Given any 
$n$-tuple $(\mathbf{s}_0, \dots, \mathbf{s}_{n-1})$  of elements 
of $\mathbb F_{q^m}$, let $(\mathbf{s}_i)_{i=0}^{\infty}$ 
denote the infinite sequence of elements of ${\mathbb F}_{q^m}$  
determined by the following linear recurrence relation:
\begin{eqnarray}
{\mathbf{s}}_{i+n}=C_0{\mathbf{s}}_i+C_1{\mathbf{s}}_{i+1}+\cdots +C_{n-1}{\mathbf{s}}_{i+n-1} \quad i=0,1,\dots. \label{sigmalfsr} 
\end{eqnarray}
%We call 
The system \eqref {sigmalfsr} is called the \emph{multiple-recursive matrix method (MRMM)} of order $n$ over 
$\mathbb F_{q^m}$, while the sequence $(\mathbf{s}_i)_{i=0}^{\infty}$  
is referred to as the \emph{sequence generated by the 
MRMM} \eqref{sigmalfsr}. The $n$-tuple 
$(\mathbf{s}_0,\mathbf{s}_1, \ldots, \mathbf{s}_{n-1})$ is 
called \emph{initial state} of the MRMM \eqref{sigmalfsr} 
and the polynomial $I_mX^n -C_{n-1}X^{n-1}- \cdots  -C_1X-C_0$ with 
matrix coefficients is called the \emph{matrix polynomial} of 
the MRMM \eqref{sigmalfsr}. The sequence 
$(\mathbf{s}_i)_{i=0}^{\infty}$  is \emph{ultimately periodic} 
if there are integers $r, n_0$ with $r\ge 1$ and $n_0\geq0$ 
such that $\mathbf{s}_{j+r}=\mathbf{s}_j$ for all $j \geq n_0$. 
The least positive integer $r$ with this property is called the 
\emph{period} of $(\mathbf{s}_i)_{i=0}^{\infty}$  and the 
corresponding least nonnegative integer $n_0$ is called the 
\emph{preperiod} of $(\mathbf{s}_i)_{i=0}^{\infty}$. The sequence 
$(\mathbf{s}_i)_{i=0}^{\infty}$ is said to be \emph{periodic} if 
its preperiod is $0$.
\end{definition}
The following proposition \cite[Proposition 4.2]{GSM} gives some basic facts about MRMM.
\begin{proposition} \label{tsrulti}
For the sequence $(\mathbf{s}_i)_{i=0}^{\infty}$ generated by
the MRMM $(\ref{sigmalfsr})$ of order $n$ over $\mathbb F_{q^m}$, we have
\begin{enumerate}
\item[{\rm (i)}] $(\mathbf{s}_i)_{i=0}^{\infty}$ is ultimately
periodic, and its period is no more than $q^{mn}-1$;
\item[{\rm (ii)}] if $C_0$ is nonsingular, then
$(\mathbf{s}_i)_{i=0}^{\infty}$ is periodic; conversely, if
$(\mathbf{s}_i)_{i=0}^{\infty}$ is periodic whenever the
initial state is of the form $(b, 0, \dots , 0)$, where
$b\in \mathbb F_{q^m}$ with $b\ne 0$, then $C_0$ is nonsingular.
\end{enumerate}
\end{proposition}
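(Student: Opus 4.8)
The plan is to recast the recurrence as a single linear map on a finite state space and then read all three assertions off from elementary properties of that map. For $i\ge 0$ I would set the state vector $\mathbf{S}_i=(\mathbf{s}_i,\mathbf{s}_{i+1},\dots,\mathbf{s}_{i+n-1})\in\mathbb{F}_q^{mn}$ and introduce the block companion matrix $A\in M_{mn}(\mathbb{F}_q)$ that sends $(\mathbf{x}_0,\dots,\mathbf{x}_{n-1})$ to $(\mathbf{x}_1,\dots,\mathbf{x}_{n-1},\,C_0\mathbf{x}_0+\cdots+C_{n-1}\mathbf{x}_{n-1})$. By \eqref{sigmalfsr} we then have $\mathbf{S}_{i+1}=A\mathbf{S}_i$, hence $\mathbf{S}_i=A^i\mathbf{S}_0$ for all $i$, so the entire problem reduces to understanding the orbit of $\mathbf{S}_0$ under $A$ on the finite set $\mathbb{F}_q^{mn}$.

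For part (i) I would invoke the pigeonhole principle: since $\mathbb{F}_q^{mn}$ has only $q^{mn}$ elements, among $\mathbf{S}_0,\dots,\mathbf{S}_{q^{mn}}$ two must coincide, say $\mathbf{S}_j=\mathbf{S}_k$ with $j<k$, and determinism forces $\mathbf{S}_{j+t}=\mathbf{S}_{k+t}$ for all $t\ge 0$, so $(\mathbf{s}_i)_{i=0}^\infty$ is ultimately periodic. For the bound on the period I would examine the cycle that the orbit eventually enters. Because $A\mathbf{0}=\mathbf{0}$, the zero state is fixed; if it lies on the cycle, then the cycle is $\{\mathbf{0}\}$ and the period is $1$, and otherwise the cycle consists entirely of nonzero states, of which there are $q^{mn}-1$. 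Either way the period is at most $q^{mn}-1$.

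For the first half of (ii) I would show that $C_0$ nonsingular makes $A$ invertible, the cleanest route being to solve the recurrence backwards: the identity $C_0\mathbf{s}_i=\mathbf{s}_{i+n}-C_1\mathbf{s}_{i+1}-\cdots-C_{n-1}\mathbf{s}_{i+n-1}$ determines $\mathbf{s}_i$ uniquely once $C_0^{-1}$ exists, which exhibits an explicit inverse for $A$. An invertible $A$ permutes the finite set $\mathbb{F}_q^{mn}$, so every state lies on a cycle; thus $A^r\mathbf{S}_0=\mathbf{S}_0$ for some $r\ge 1$, giving $\mathbf{s}_{i+r}=\mathbf{s}_i$ for all $i\ge 0$, i.e. preperiod $0$ and pure periodicity. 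For the converse I would argue by contraposition: if $C_0$ is singular then $\ker C_0\neq\{\mathbf{0}\}$, so I may choose $\mathbf{b}\in\ker C_0$ with $\mathbf{b}\neq\mathbf{0}$ and start from $(\mathbf{b},\mathbf{0},\dots,\mathbf{0})$. A direct substitution gives $\mathbf{s}_n=C_0\mathbf{b}=\mathbf{0}$, whence $\mathbf{S}_1=\mathbf{0}$ and $\mathbf{s}_i=\mathbf{0}$ for every $i\ge 1$; the resulting sequence $\mathbf{b},\mathbf{0},\mathbf{0},\dots$ has preperiod $1$ and so is not periodic, contradicting the hypothesis.

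The routine parts are the pigeonhole step and the backward-solvability of the recurrence. The one place I expect to need genuine care is the sharp period estimate in (i): the naive count yields only $q^{mn}$, and obtaining $q^{mn}-1$ hinges on the observation that the absorbing zero state cannot share a cycle with any nonzero state. I anticipate that this case distinction, together with the clean choice of a nonzero kernel vector in the converse of (ii), will be the crux of the write-up.
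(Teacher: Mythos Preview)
The paper does not supply its own proof of this proposition; it is quoted verbatim from \cite[Proposition~4.2]{GSM}. Your argument is correct and is the standard one: the state-space map you call $A$ is exactly the block companion matrix the paper introduces (as $T$, acting on the right via $\mathbf{S}_k=\mathbf{S}_0T^k$) in the paragraph following the proposition, and your pigeonhole/invertibility/kernel-vector steps are precisely how such results are established. There is nothing to compare against and nothing to fix.
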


An MRMM of order $n$ over $\mathbb F_{q^m}$ is \emph{primitive} if
for any choice of nonzero initial state, the sequence generated
by that MRMM is periodic of period  $q^{mn}-1$.

In view of Proposition \ref{tsrulti} if $I_mX^n -C_{n-1}X^{n-1}- \cdots  -C_1X-C_0 \in M_m \left(\Fq\right)[X]$ is the matrix polynomial of a primitive MRMM, then the matrix $C_0$ is necessarily nonsingular.

Corresponding to a matrix polynomial $I_mX^n -C_{n-1}X^{n-1}-
\cdots-C_1X-C_0 \in M_m(\Fq)[X]$, we can associate a $(m,n)$-block
companion matrix $T \in M_{mn}(\Fq)$ of the following form
\begin{equation} \label{typeP}
T =
\begin {pmatrix}
\mathbf{0} & \mathbf{0} & \mathbf{0} & . & . & \mathbf{0} & \mathbf{0} & C_0\\
I_m & \mathbf{0} & \mathbf{0} & . & . & \mathbf{0} & \mathbf{0} & C_1\\
. & . & . & . & . & . & . & .\\
. & . & . & . & . & . & . & .\\
\mathbf{0} & \mathbf{0} & \mathbf{0} & . & . & I_m & \mathbf{0} & C_{n-2}\\
\mathbf{0} & \mathbf{0} & \mathbf{0} & . & . & \mathbf{0} & I_m & C_{n-1}
\end {pmatrix},
\end{equation}
where $I_m$ denotes the $m\times m$ identity matrix over $\Fq$, while
$\mathbf{0}$ indicates the zero matrix in $M_m(\Fq)$. The set of all such
$(m,n)$-block companion matrices $T$ over $\Fq$ shall be denoted
by $\MRMM(m,n,q)$. 
Using a Laplace expansion or a suitable sequence of
elementary column operations, we conclude that if $T \in \MRMM(m,n,q)$
is given by \eqref{typeP}, then $\det T = \pm \det (C_0)$.
Consequently,
\begin{equation}
\label{nonsingP}
T \in \GL_{mn}(\Fq) \Longleftrightarrow C_0\in \GL_m(\Fq).
\end{equation}
where $\GL_m(\Fq)$ is the general linear group of all $m \times m$
nonsingular matrices over $\Fq$.

It may be noted that the block companion matrix \eqref{typeP} is the
state transition matrix for the MRMM \eqref{sigmalfsr}. Indeed, the $k$-th
state $\mathbf{S}_k:=\left(\mathbf{s}_{k}, \mathbf{s}_{k+1}, \dots,
\mathbf{s}_{k+n-1}\right) \in \Fqm^n$ of the MRMM \eqref{sigmalfsr} is
obtained from the initial state
$\mathbf{S}_0:=\left(\mathbf{s}_{0}, \mathbf{s}_{1}, \dots,
\mathbf{s}_{n-1}\right) \in \Fqm^n$ by $\mathbf{S}_k = \mathbf{S}_0 T^k$,
for any $k\ge 0$. We can identify MRMM \eqref{sigmalfsr} with block companion matrix \eqref{typeP}.

The following lemma \cite[Lemma 5.1]{GSM} reduces the calculation of an $mn\times mn$ determinant to an
$m\times m$ determinant. %The proof of this lemma may be extracted from \cite[Theorem 4]{N1}.
\begin{lemma}
\label{mntom}
Let $T \in \MRMM(m,n,q)$ be given as in \eqref{typeP} and also let
$M(X)\in M_m\left(\Fq[X]\right)$ be defined by
$M(X) := I_m X^n - C_{n-1}X^{n-1} - \cdots - C_1 X - C_0$.
Then the characteristic polynomial of $T$ is equal to
$\det \left(M(X)\right)$.
\end{lemma}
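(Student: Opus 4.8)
The plan is to compute the characteristic polynomial $\det(XI_{mn}-T)$ directly and to collapse this $mn\times mn$ determinant over $\Fq[X]$ down to an $m\times m$ determinant by exploiting the block-companion shape of $T$. Writing out $XI_{mn}-T$ from \eqref{typeP}, the diagonal blocks are all $XI_m$, the block subdiagonal consists of copies of $-I_m$, the last block column has successive blocks $-C_0,-C_1,\dots,-C_{n-2}$ and $XI_m-C_{n-1}$ from top to bottom, and every other block is $\mathbf{0}$. The whole game is to clear the subdiagonal $-I_m$ blocks using the diagonal $XI_m$ blocks so that the matrix becomes block upper triangular, at which point its determinant factors as a product of $n$ block determinants, all but one of which is simply $\det(XI_m)=X^{m}$.

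Concretely, I would work over the field of fractions $\Fq(X)$ and perform, for $i=2,3,\dots,n$ in this order, the block row operation $R_i\leftarrow R_i+X^{-1}R_{i-1}$. Each such step annihilates the $-I_m$ in block position $(i,i-1)$ (the block directly above it having become $XI_m$), leaves the diagonal block in column $i$ untouched, and feeds a correction into the last block column only. A short induction on $i$ shows that after all these operations the matrix is block upper triangular, with diagonal blocks $XI_m$ in rows $1,\dots,n-1$ and bottom-right block
\[
D:=XI_m-C_{n-1}-\sum_{k=0}^{n-2}X^{-(n-1-k)}C_k .
\]
Since these row operations do not change the determinant and a block triangular determinant is the product of the diagonal block determinants, I obtain $\det(XI_{mn}-T)=X^{m(n-1)}\det(D)$.

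To finish, I would observe the clean factorization
\[
X^{n-1}D=X^{n}I_m-C_{n-1}X^{n-1}-\sum_{k=0}^{n-2}C_kX^{k}=M(X),
\]
where the scalar powers of $X$ commute past the matrices $C_k$, so no noncommutativity issue arises. Hence $\det(M(X))=\det(X^{n-1}D)=X^{m(n-1)}\det(D)=\det(XI_{mn}-T)$, which is exactly the claim. Because both the characteristic polynomial and $\det(M(X))$ already lie in $\Fq[X]$, proving the equality as rational functions over $\Fq(X)$ forces equality in $\Fq[X]$, so the detour through the fraction field is harmless.

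The step needing the most care is the bookkeeping of the last block column under the successive operations $R_i\leftarrow R_i+X^{-1}R_{i-1}$: one must verify that the accumulated correction in row $i$ is precisely $-\sum_{k=0}^{i-1}X^{-(i-1-k)}C_k$, which is the telescoping identity underpinning the final factorization $X^{n-1}D=M(X)$. A reader wishing to avoid the fraction field altogether can instead argue by induction on $n$, removing the first block row and column by a Laplace or Schur-complement expansion; the inductive step reproduces the same recursion at the cost of heavier block-matrix algebra. I expect the triangularization route to be the shorter and more transparent of the two.
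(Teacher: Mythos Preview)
Your argument is correct. The block row operations $R_i\leftarrow R_i+X^{-1}R_{i-1}$ over $\Fq(X)$ clear the subdiagonal as claimed, the induction on the last block column gives exactly $-\sum_{k=0}^{i-1}X^{-(i-1-k)}C_k$, and the identity $X^{n-1}D=M(X)$ then closes the computation cleanly; the passage through the fraction field is harmless for the reason you state.

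As for comparison: the paper does not actually prove this lemma. It is quoted verbatim as \cite[Lemma~5.1]{GSM} and used as a black box, so there is no in-paper argument to set your proposal against. Your write-up therefore supplies a self-contained proof where the paper offers only a citation. If you want to align stylistically with what the paper does elsewhere (see the proof of Lemma~\ref{dethmatrix}), you could recast the same elimination without denominators by instead performing $R_{i-1}\leftarrow R_{i-1}+X\,R_i$ from the bottom up; this keeps everything in $\Fq[X]$ throughout and avoids the fraction-field remark, but it is purely cosmetic and changes nothing of substance.
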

The following characterization of primitive MRMM can be easily extracted from the results given in \cite{GSM} (see also \cite[Theorem 4]{N1}).
\begin{proposition}
\label{primtsr2}
Let $C_0 \in \GL_m(\Fq)$. Then the following are equivalent:
\begin{enumerate}
\item[{\rm (i)}] an MRMM \eqref{sigmalfsr} of order $n$ over $\Fqm$ is 
primitive;

\item[{\rm (ii)}]$o(T) = q^{mn}-1$, where $o(T)$ denotes the multiplicative order of $T$ in $\GL_{mn}(\Fq)$ ;

\item[{\rm (iii)}] $\det \left(M(X)\right)$ is a primitive polynomial 
over $\Fq$ of degree $mn$,  where $M(X)$ is same as defined in Lemma \ref{mntom}. 
\end{enumerate} 
\end{proposition}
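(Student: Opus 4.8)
The plan is to prove the cyclic chain of implications $(i)\Rightarrow(ii)\Rightarrow(iii)\Rightarrow(i)$, throughout identifying the MRMM with its block companion matrix $T$ via the state relation $\mathbf{S}_k=\mathbf{S}_0T^k$ and using that $T\in\GL_{mn}(\Fq)$ by \eqref{nonsingP}, since $C_0\in\GL_m(\Fq)$ by hypothesis.

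For $(i)\Rightarrow(ii)$, I would argue at the level of the orbits of the right action of $\langle T\rangle$ on $\Fq^{mn}$. If the MRMM is primitive, then by definition every nonzero initial state $\mathbf{S}_0$ satisfies $\mathbf{S}_0T^{q^{mn}-1}=\mathbf{S}_0$; as this also holds trivially for the zero state, it forces $T^{q^{mn}-1}=I$, whence $o(T)\mid q^{mn}-1$. On the other hand, the period of any single nonzero state divides $o(T)$, and primitivity makes this period equal to $q^{mn}-1$, so $o(T)\ge q^{mn}-1$. Combining the two inequalities gives $o(T)=q^{mn}-1$.

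The heart of the argument is $(ii)\Rightarrow(iii)$. Here I would use the standard fact that the multiplicative order of $T$ equals $\ord(g)$, the order of the minimal polynomial $g$ of $T$ (well defined since $g(0)\ne 0$ as $T$ is nonsingular), together with the bound $\ord(g)\le q^{\deg g}-1$ valid for any polynomial $g$ with nonzero constant term. Since $\deg g$ is at most $mn$, the hypothesis $o(T)=q^{mn}-1$ forces $\deg g=mn$ and $\ord(g)=q^{mn}-1$. Because the characteristic polynomial of $T$ has degree $mn$ and is divisible by $g$, we conclude that $g$ coincides with the characteristic polynomial, which by Lemma \ref{mntom} equals $\det\left(M(X)\right)$. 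Finally, writing a factorization $g=\prod_i p_i^{e_i}$ into distinct irreducibles gives $\ord(g)=\operatorname{lcm}_i\ord(p_i^{e_i})\le\prod_i(q^{\deg p_i}-1)p^{s_i}$, with $p$ the characteristic and $p^{s_i}\ge e_i$ the minimal such power, and a short estimate shows this quantity is strictly below $q^{\deg g}-1$ unless $g$ consists of a single irreducible factor; hence equality in the order bound forces $g=\det\left(M(X)\right)$ to be irreducible of degree $mn$ and, having order $q^{mn}-1$, to be primitive. Pinning down exactly when $\ord(g)=q^{\deg g}-1$ is the step I expect to be the main obstacle, and it may alternatively be quoted from the theory of the order of polynomials and matrices in Lidl--Niederreiter.

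For $(iii)\Rightarrow(i)$, suppose $\det\left(M(X)\right)$ is primitive of degree $mn$. By Lemma \ref{mntom} it is the characteristic polynomial of $T$, and being irreducible it equals the minimal polynomial, so $\Fq^{mn}$ becomes a cyclic $\Fq[T]$-module isomorphic to $\Fqmn$ on which $T$ acts as multiplication by a root $\theta$ of $\det\left(M(X)\right)$. Since $\theta$ is a primitive element of $\Fqmn$, its powers exhaust $\Fqmn^{*}$, so $\langle T\rangle$ acts transitively on the nonzero vectors and the right orbit of every nonzero state has size $q^{mn}-1$; equivalently, every nonzero initial state generates a sequence of period $q^{mn}-1$, which is precisely primitivity of the MRMM.
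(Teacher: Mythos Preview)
Your argument is correct, with the one caveat you already flag: the step ``$\ord(g)=q^{\deg g}-1$ forces $g$ primitive'' is standard (Lidl--Niederreiter, Theorem~3.16) and is cleanest simply quoted rather than re-derived via the factorization estimate you sketch, which as written is a little loose (you would also need to rule out a single repeated irreducible factor, i.e.\ $e_1\ge 2$, not just multiple distinct factors).

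As for comparison with the paper: the paper does \emph{not} give a proof of Proposition~\ref{primtsr2}. It states the result and refers the reader to \cite{GSM} and \cite[Theorem~4]{N1}, saying only that the characterization ``can be easily extracted'' from those sources. Your self-contained cyclic argument $(i)\Rightarrow(ii)\Rightarrow(iii)\Rightarrow(i)$ via orbits of $\langle T\rangle$, the order of the minimal polynomial, and the $\Fq[T]$-module structure of $\Fq^{mn}$ is exactly the kind of argument that underlies those references, so you have essentially supplied what the paper omits.
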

We recall a lemma \cite[Lemma 1]{N2} that enables us to determine the linear complexity of sequences generated by primitive MRMM.
\begin{lemma}
\label{CWLC}
Let 
$$ \mathbf{s}_i=\left(s_i^{(1)},\dots,s_{i}^{(m)}\right) 
     \in \Fq^m \simeq \Fqm \quad  i=0,1,\dots, $$
be an arbitrary recursive vector sequence and let $h(X) \in \Fq[X]$ 
be the characteristic polynomial of the matrix $T$ in 
$(\ref{typeP})$. Then for each $1\leq j \leq m$ the sequence 
$s_0^{(j)},s_1^{(j)}, \dots$ of the $j^{\rm th}$ coordinates is a linear 
recurring sequence in $\Fq$ with characteristic polynomial $h(X)$.
\end{lemma}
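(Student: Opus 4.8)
The plan is to lift the scalar coordinate sequences back up to the level of the state vectors and then invoke the Cayley--Hamilton theorem. Recall from the discussion preceding Lemma~\ref{mntom} that the $k$-th state $\mathbf{S}_k = \left(\mathbf{s}_k, \mathbf{s}_{k+1}, \dots, \mathbf{s}_{k+n-1}\right)$, regarded via the coordinate identification as a row vector in $\Fq^{mn}$, is obtained from $\mathbf{S}_0$ by $\mathbf{S}_k = \mathbf{S}_0 T^k$, where $T \in M_{mn}(\Fq)$ is the block companion matrix \eqref{typeP}. Consequently the semigroup property $\mathbf{S}_{k+\ell} = \mathbf{S}_0 T^{k+\ell} = \mathbf{S}_k T^\ell$ holds for all $k,\ell \ge 0$. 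The scalar entry $s_k^{(j)}$ occurs among the $mn$ coordinates of $\mathbf{S}_k$ --- precisely as the $j$-th coordinate of its leading block $\mathbf{s}_k$, for $1 \le j \le m$ --- so it suffices to show that every coordinate of the state vector, viewed as a sequence in $k$, is annihilated by $h(X)$.

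I would write $h(X) = \sum_{\ell=0}^{mn} c_\ell X^\ell$ with $c_{mn} = 1$. Since $h(X)$ is by hypothesis the characteristic polynomial of $T$ (equivalently $h(X) = \det(M(X))$ by Lemma~\ref{mntom}), the Cayley--Hamilton theorem gives $h(T) = \sum_{\ell=0}^{mn} c_\ell T^\ell = \mathbf{0}$ in $M_{mn}(\Fq)$. Combining this with the semigroup property yields, for every $k \ge 0$,
\begin{equation*}
\sum_{\ell=0}^{mn} c_\ell\, \mathbf{S}_{k+\ell} \;=\; \mathbf{S}_k \sum_{\ell=0}^{mn} c_\ell T^\ell \;=\; \mathbf{S}_k\, h(T) \;=\; \mathbf{0}.
\end{equation*}
Reading this vector identity coordinatewise, each of the $mn$ coordinate sequences of $(\mathbf{S}_k)_{k\ge 0}$ satisfies the linear recurrence whose characteristic polynomial is $h(X)$.

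Finally I would specialize to the coordinates in positions $1, \dots, m$ of the leading block, which are exactly $s_k^{(1)}, \dots, s_k^{(m)}$, to conclude that $\sum_{\ell=0}^{mn} c_\ell\, s_{k+\ell}^{(j)} = 0$ for all $k \ge 0$ and each $1 \le j \le m$; this is precisely the assertion that $(s_i^{(j)})_{i\ge 0}$ is a linear recurring sequence with characteristic polynomial $h(X)$. I do not expect a serious obstacle here: the argument is a clean combination of the transition identity $\mathbf{S}_{k+\ell} = \mathbf{S}_k T^\ell$ with Cayley--Hamilton. The only points demanding care are bookkeeping ones --- keeping the row-vector convention $\mathbf{S}_k = \mathbf{S}_0 T^k$ consistent so that $h(T)$ acts on the right, and confirming that each $s_k^{(j)}$ genuinely sits among the scalar entries of $\mathbf{S}_k$ so that the scalar recurrence is inherited from the vector one.
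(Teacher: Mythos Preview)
Your argument is correct: the transition identity $\mathbf{S}_{k+\ell}=\mathbf{S}_kT^\ell$ combined with Cayley--Hamilton for $T$ immediately forces every scalar coordinate of the state sequence, and in particular each $s_k^{(j)}$, to satisfy the recurrence with characteristic polynomial $h(X)$. The bookkeeping points you flag are handled exactly as you describe.

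As for comparison, the paper does not supply its own proof of this lemma at all; it simply quotes the result from Niederreiter~\cite[Lemma~1]{N2}. Your Cayley--Hamilton argument is the standard route to such statements and matches what one would expect in the cited source, so there is no substantive divergence to discuss.
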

The following corollary trivially follows from Lemma \ref{CWLC} and gives the 
component-wise linear complexity of the sequences generated by primitive MRMM.
%was proved for primitive TSR in \cite{HPW}, but remains true for primitive MRMM as well. We include the proof of this corollary for primitive MRMM along the similar 
%lines as given in \cite{HPW} for the sake of completeness.%  gives the 
%componentwise linear complexity of the sequences generated by 
%primitive MRMM.
\begin{corollary}
\label{compLC}
Let 
$$ \mathbf{s}_i=\left(s_i^{(1)},\dots,s_{i}^{(m)}\right) 
     \in \Fq^m \simeq \Fqm \quad i=0,1,\dots,$$ 
be a sequence generated by a primitive MRMM of order $n$ over 
$\Fqm$. Then for each $1\leq j \leq m$, the linear complexity 
of the $j^{\rm th}$ coordinate sequence $s_0^{(j)},s_1^{(j)}, \dots$ 
over $\Fq$ is $mn$.
\end{corollary}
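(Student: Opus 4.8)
The plan is to combine Lemma~\ref{CWLC} with the primitivity characterization of Proposition~\ref{primtsr2}, together with the standard fact that the linear complexity of a linear recurring sequence equals the degree of its minimal polynomial, and that this minimal polynomial divides every characteristic polynomial of the sequence. Fix $j$ with $1\le j\le m$ and let $g_j(X)\in\Fq[X]$ denote the minimal polynomial of the $j$-th coordinate sequence $s_0^{(j)},s_1^{(j)},\dots$, so that the linear complexity of this sequence is exactly $\deg g_j$.

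First I would apply Lemma~\ref{CWLC} to see that each coordinate sequence is a linear recurring sequence over $\Fq$ admitting the characteristic polynomial $h(X)$ of $T$; hence $g_j(X)$ divides $h(X)$. Since the MRMM is primitive, Proposition~\ref{primtsr2}(iii) shows that $h(X)=\det\left(M(X)\right)$ is a primitive polynomial of degree $mn$ over $\Fq$, and in particular irreducible. The only monic divisors of an irreducible polynomial being $1$ and the polynomial itself, I conclude $g_j(X)\in\{1,h(X)\}$, so $\deg g_j$ is either $0$ or $mn$. It therefore suffices to exclude the degenerate possibility $g_j=1$, that is, to show that the $j$-th coordinate sequence is not identically zero.

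The hard part, which is really the only non-formal step, is ruling out this degenerate case, and here I would use primitivity in its dynamical form. By Proposition~\ref{primtsr2}(ii) the state transition matrix satisfies $o(T)=q^{mn}-1$, so for a nonzero initial state $\mathbf{S}_0$ the states $\mathbf{S}_i=\mathbf{S}_0 T^i$ traverse a single cycle of length $q^{mn}-1$, which is precisely the number of nonzero vectors in $\Fqm^n$. Thus the orbit is exactly the set of all nonzero states, and reading off the first block $\mathbf{s}_i$ of $\mathbf{S}_i$ shows that $\mathbf{s}_i$ attains every nonzero value of $\Fqm$ as $i$ ranges over one period. Since some nonzero element of $\Fqm$ has nonzero $j$-th coordinate, the sequence $s_i^{(j)}$ assumes a nonzero value for some $i$; hence the $j$-th coordinate sequence is not identically zero and $g_j=h$. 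This forces $\deg g_j=mn$ for every $j$, which is the assertion. The single point demanding care is exactly that nonvanishing must be verified coordinate-by-coordinate, not merely for the vector sequence as a whole, and it is the completeness of the state cycle under primitivity that supplies this.
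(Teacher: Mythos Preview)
Your argument is correct and matches the paper's intended reasoning: the paper offers no proof beyond declaring that the corollary ``trivially follows from Lemma~\ref{CWLC},'' and what you have written is exactly the natural unpacking of that claim---each coordinate sequence has $h(X)$ as a characteristic polynomial, $h$ is irreducible of degree $mn$ by Proposition~\ref{primtsr2}(iii), so the minimal polynomial is either $1$ or $h$. Your extra care in excluding the degenerate case via the full state cycle (tacitly assuming a nonzero initial state, as the paper also does implicitly) is more than the paper supplies, but the approach is the same.
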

%\begin{proof}
%The sequence $(\mathbf{s}_i)_{i=0}^{\infty}$ is generated by a 
%primitive MRMM. By Lemma \ref{CWLC}, for each 
%$1\leq j \leq m$, the minimal polynomial of the component sequence
% $({s}_i^{(j)})_{i=0}^{\infty}$ is primitive of degree 
%$mn$. Hence, the linear complexity is $mn$.
%\hfill {$\Box$}
%\end{proof}
%As an immediate consequence of this corollary, we have 
%It is easy to see from Corollary \ref{compLC} that if 
%a sequence in $\Fqm$ generated by a primitive MRMM of length $n$ 
%is viewed as a sequence in $\Fq$, then its linear complexity 
%is given by $m^2n$.
An alternative statement of the Corollary \ref{compLC} can be found in \cite[Theorem 3]{Zeng}. Moreover, in view of Corollary \ref{compLC}, it is clear that if a sequence over $\Fqm$ generated by a primitive MRMM of order $n$ is viewed as a sequence over $\Fq$, then its linear complexity 
is $m^2n$. In effect, a primitive MRMM of order $n$ over $\Fqm$ is same as $m$ parallel primitive LFSRs of order $mn$. %with minimal polynomial $\det \left(M(X)\right)$.

\section{Construction of the Multiple-Recursive Matrix Method}\label{construction}
As alluded to in the introduction, Zeng {\it et al.} proposed a search algorithm \cite[Algorithm 1]{Zeng} for generating efficient primitive MRMM. Their algorithm begins by randomly choosing some matrices that 
are compatible with word operations and then testing the primitivity of a polynomial obtained by computing the determinant of a matrix using Lemma \ref{mntom}. %, compute the determinant of a matrix, which will be a polynomial and finally   to test the primitivity of the corresponding MRMM. 

We begin this section by defining the notion of generalized Horner's form corresponding to a given polynomial. We then use it to construct an efficient primitive MRMM. It may be noted that the idea of using Horner's form in the context of LFSR may not be common, but it has been used to make jumping efficient as can be seen in \cite{HML}. %Our approach of using generalized notion of Horner's form for constructing primitive MRMM is completely new. %of order $n$ over $\Fqm$, we express a given primitive polynomial of degree $mn$ over $\Fq$ in the generalized Horner's form. It may be noted that the idea of using Horner's method to increase the efficiency of constructing a LFSR may not be common, but it has been used in the literature; see, for example, \cite{HML}. %However, our approach of using generalized Horner's form for constructing primitive MRMM is completely new. %The results of this section may also be of some independent interest. The various steps of our algorithm and issues pertaining to its 
%implementation are 
%discussed in the subsequent sections. 
%The software implementation of MRMM obtained through the search 
%algorithm is efficient, but it has a smaller search space. %Thus, for larger values of $m$ and $n$, search algorithm becomes a little sluggish, that is, it finds primitive MRMM after a very large number of attempts.
% As mentioned earlier, there are methods for constructing primitive MRMM, see for instance, \cite[Theorem 6.1]{GSM}, Lachaud \cite{Gilles} and Krishnaswamy {\it et al.} \cite{KP2}. 
% The latter two constructions produce all the primitive MRMM. However, our main focus here is to propose an algorithm that constructs only efficient primitive MRMM.  %of order $n$ over $\Fqm$. 
% In this section, we shall develop some mathematical tools and techniques that are required for constructing efficient primitive MRMM. The different steps of our construction algorithm and issues 
% pertaining to its implementation are discussed in the subsequent sections.
%and present various sequential steps of our construction algorithm. The issues pertaining to its implementation are discussed in the subsequent sections.  
 \begin{definition}
 Let $f(X)=\displaystyle\sum_{i=0}^{d}a_iX^{i}$ be  a polynomial of degree $d$ over $\Fq$. For any given positive integer $n\leq d$, we can find integers $m$ and $r$ such that $d=mn+r$, where $0\leq r<n$. We express $f(X)$ in the following form
\begin{equation} \label{horner}
f(X)=\displaystyle f_0+X^n\left(f_1+ X^n\left(f_2+ \cdots +X^n\left(f_{m-1}+X^nf_m\right)\cdots\right)\right),
\end{equation}  
where, 
$$f_i(X)=\displaystyle \sum^{(i+1)n-1}_{k=in}a_kX^{k-in}~~\mbox{for}~~ i=0,1,\ldots ,(m-1)~~\mbox{and}~~ f_m(X)=\displaystyle \sum^{d}_{k=mn}a_kX^{k-mn}.$$ The representation of $f(X)$ in \eqref{horner} is referred to as $n$-Horner's 
form \footnote{The $1$-Horner's form is indeed the usual Horner's form of a polynomial used for computing the polynomial value with less number of multiplications.} of $f(X)$. 
%  Let $m$ and $n$ be positive integers and let $f(X)=\displaystyle\sum_{i=0}^{mn}a_iX^{i}$ be  polynomial of degree $mn$ over $\Fq$. We can express $f(X)$ in the following form
% \begin{equation} \label{horner}
% f(X)=\displaystyle f_0+X^n\left(f_1+ X^n\left(f_2+ \cdots +X^n\left(f_{m-1}+X^nf_m\right)\cdots\right)\right),
% \end{equation}  
% where $f_i(X)=\displaystyle \sum^{(i+1)n-1}_{k=in}a_kX^{k-in}$ for $i=0,1,\ldots ,(m-1)$ and $f_m(X)=a_{mn}\neq 0$. The representation of $f(X)$ in \eqref{horner} is referred to as $(m,n)$-Horner's 
% form \footnote{This name is motivated by Horner's form of a polynomial used for computing the polynomial value with less number of multiplications.} of $f(X)$. 
\end{definition}
\begin{example}\label{exam1}
 %Let $m=4$, $n=3$,  and 
 Consider the polynomial $f(X)=1+X^2+X^5+X^7+X^{10}+X^{11}+X^{12}\in \mathbb F_2[X]$ of degree $12$. Here $d=12$. For $n=3$, we have $m=4$ and $r=0$. Then $3$-Horner's form of $f(X)$ is given by
$$f(X)=f_0+X^3(f_1+X^3 (f_2+X^3 (f_3+X^3f_4 ))),$$
where $f_0(X)=(1+X^2)$, $f_1(X)=X^2$, $f_2(X)=X$, $f_3(X)=(X+X^2 )$ and $f_4(X)=1$.
\end{example}
Corresponding to the $n$-Horner's form of a given polynomial of degree $mn$ over $\Fq$, we can associate an $m \times m$ matrix as defined below. This matrix would play a crucial role in 
the construction of efficient multiple-recursive matrix method of order $n$ over $\Fqm$ for generating pseudorandom vectors.
\begin{definition}
Let $m$ and $n$ be positive integers and let $f(X)=\displaystyle\sum_{i=0}^{mn}a_iX^{i}$ be  polynomial of degree $mn$ over $\Fq$. %as above. 
The $m\times m$ matrix
\begin{equation} \label{hmatrix}
\begin{pmatrix}
  X^n    &    0   &  0   &  \cdots &    0   & f_0 \\
  -1      &    X^n &  0   &  \cdots &    0   & f_1 \\
  0      &    -1   &  X^n &  \cdots &    0   & f_2 \\
  \vdots & \vdots & \vdots & \ddots & \vdots & \vdots   \\
%  0      &    0   & \cdots &    0   & E_{m-3} \\
  0      &    0   &  0   &   \cdots &   X^n  & f_{m-2} \\
  0      &    0   &  0   &   \cdots &    -1   & f_{m-1}+f_mX^n 
\end{pmatrix},
\end{equation}
corresponding to the $n$-Horner's form \eqref{horner} of $f(X)$ is referred to as $n$-Horner's matrix of $f(X)$ and denoted as $H_m(n,f)$.
\end{definition}
For each $j=0, 1, \dots, n-1$, let $C_j$ denotes the $m\times m$ matrix whose entries are coefficients of $X^j$ in the matrix $H_m(n,f)$. It is easy to see that the matrix $H_m(n,f)$ can be written as
 \begin{equation} \label{mpoly}
 H_m(n,f)=I_mX^n+C_{n-1}X^{n-1}+ \cdots  +C_1X+C_0,
 \end{equation}
 provided $f$ is monic, that is, $f_m=1$.
 It is clear from \eqref{mpoly} that we can associate the multiple-recursive matrix method of order $n$ over $\Fqm$ corresponding to these $m \times m$ matrices $C_0, C_1,\dots, C_{n-1}$.

It is interesting to note that the matrix $C_j (1\leq j \leq n-1)$ has the following form
$$ 
C_j  = 
\begin{pmatrix}
  0    &    0   &  0   &  \cdots &    0   & a_j\\
  0      &    0 &  0   &  \cdots &    0   & a_{n+j} \\
  0      &    0   &  0 &  \cdots &    0   & a_{2n+j} \\
  \vdots & \vdots & \vdots & \ddots & \vdots & \vdots   \\
%  0      &    0   & \cdots &    0   & E_{m-3} \\
  0      &    0   &  0   &   \cdots &   0  & a_{(m-2)n+j} \\
  0      &    0   &  0   &   \cdots &    0   & a_{(m-1)n+j} 
\end{pmatrix}
$$
whose first $(m-1)$ columns are zero. Moreover, the matrix $C_0$ has the following form
  $$ 
C_0  = 
\begin{pmatrix}
  0    &    0   &  0   &  \cdots &    0   & a_0\\
  -1      &    0 &  0   &  \cdots &    0   & a_n \\
  0      &    -1   &  0 &  \cdots &    0   & a_{2n} \\
  \vdots & \vdots & \vdots & \ddots & \vdots & \vdots   \\
%  0      &    0   & \cdots &    0   & E_{m-3} \\
  0      &    0   &  0   &   \cdots &   0  & a_{(m-2)n} \\
  0      &    0   &  0   &   \cdots &    -1   & a_{(m-1)n}
\end{pmatrix}.
$$
%where $*$ denotes other entries of the matrix.
It is due to this special structure of these matrices that we are able to construct an efficient multiple-recursive matrix method. In Section \ref{implementation}, we shall see in greater detail why such a construction is fast and efficient. %software and hardware realization of this type construction 

The following lemma gives the determinant of the matrix $H_m(n,f)$ and will be used in the sequel. 
\begin{lemma}\label{dethmatrix}
 Let $H_m(n,f)$ be $n$-Horner's matrix corresponding to the polynomial $f(X)$ of degree $mn$ over $\Fq$ as defined in \eqref{hmatrix}. Then $\det \left(H_m(n,f)\right)$ is equal to $f(X)$. 
\end{lemma}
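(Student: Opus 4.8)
The plan is to compute $\det\left(H_m(n,f)\right)$ by exploiting the near-triangular structure of the matrix in \eqref{hmatrix}. The matrix has $X^n$ along the diagonal (except in the bottom-right corner), $-1$ along the subdiagonal, the last column populated by the Horner coefficients $f_0, f_1, \dots, f_{m-1}+f_m X^n$, and zeros everywhere else. The natural approach is cofactor (Laplace) expansion along the last column, since every other column has at most two nonzero entries arranged in a predictable pattern. I expect the determinant to telescope into exactly the $n$-Horner's form \eqref{horner}, thereby recovering $f(X)$.

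First I would expand the determinant along the final column, writing $\det\left(H_m(n,f)\right) = \sum_{i=0}^{m-1} (-1)^{i+m} g_i \, M_{i+1,m}$, where $g_i = f_i$ for $i < m-1$ and $g_{m-1} = f_{m-1}+f_m X^n$, and $M_{i+1,m}$ is the corresponding minor. The key observation is that each minor $M_{i+1,m}$ splits as a product of two blocks: the rows and columns above the deleted entry form an upper-triangular block with $X^n$ on its diagonal contributing a factor $(X^n)^{\,i}$, while the rows below form a lower-triangular block with $-1$ on its diagonal contributing a factor $(-1)^{\,m-1-i}$. Combining these with the sign $(-1)^{i+m}$ from the cofactor, the signs should cancel cleanly, leaving $\det\left(H_m(n,f)\right) = \sum_{i=0}^{m-1} f_i (X^n)^i + f_m (X^n)^m$, which is precisely the expansion of the nested form \eqref{horner}.

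Alternatively, and perhaps more transparently, I would proceed by induction on $m$ using expansion along the \emph{first} column, which contains only the two entries $X^n$ (top) and $-1$ (below it). This yields a two-term recursion $\det\left(H_m(n,f)\right) = X^n \cdot D_{m-1} + f_0 \cdot (\text{cofactor})$, where $D_{m-1}$ is the determinant of the analogous $(m-1)\times(m-1)$ Horner matrix built from $f_1, f_2, \dots, f_m$. Peeling off $f_0$ and factoring $X^n$ out of the remainder directly mirrors the recursive definition of the $n$-Horner's form, so the induction hypothesis immediately gives the nested polynomial. The base case $m=1$ is the scalar $f_0 + f_m X^n = f(X)$.

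The main obstacle is bookkeeping rather than any conceptual difficulty: one must track the cofactor signs and verify that the contribution of the bottom-right entry $f_{m-1}+f_m X^n$ correctly supplies both the $f_{m-1}(X^n)^{m-1}$ and the leading $f_m(X^n)^m$ terms. Care is also needed to confirm that the off-diagonal $-1$ entries in the last column's minors contribute the right power of $-1$ to cancel the alternating cofactor signs; a single sign error would flip the result. I would handle this by carefully setting up the block decomposition in the Laplace expansion, or equivalently by being explicit about the two-term recursion in the inductive approach, where the signs are easiest to control.
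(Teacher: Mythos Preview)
Your proposal is correct, and either of your two outlined approaches (Laplace expansion along the last column, or induction via expansion along the first column) works with the sign analysis you sketch. However, the paper takes a different route: instead of cofactor expansion, it performs a sequence of elementary row operations---adding $X^n$ times row $k$ to row $k-1$, working upward from the bottom---which successively eliminates the $X^n$ entries on the main diagonal while building up the nested Horner expression in the top-right corner of the last column. Column operations then zero out the rest of the last column, and a final sequence of column swaps reduces the matrix to $\diag(f(X),1,\dots,1)$ with a tracked sign of $(-1)^{2(m-1)}=1$. The paper's method has the virtue of making the Horner recursion appear mechanically and transparently through the row operations themselves, so no minor computations or block-triangular arguments are needed; your approach, by contrast, handles all signs in one stroke via the block structure of the minors, which is equally clean once the block decomposition is verified. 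Both are elementary and of comparable length; the choice is largely a matter of taste.
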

\begin{proof}
Add $X^n$ times the $n^{\rm th}$ row to the $(n-1)^{\rm th}$ row of the matrix $H_m(n,f)$. 
This will remove the $X^n$ in the $(n-1)^{\rm th}$ row and it will not alter the determinant. Next, add $X^n$ times the new $(n-1)^{\rm th}$ row to the $(n-2)^{\rm th}$ row. Continue successively until
all of the ${X^n}'s$ on the main diagonal have been removed. The result is the matrix

$$\begin{pmatrix}
  0    &    0   &  0   &  \cdots &    0   & f_0+X^n\left(f_1+ X^n\left(f_2+ \cdots +X^n\left(f_{m-1}+X^nf_m\right)\cdots\right)\right) \\
  -1      &    0 &  0   &  \cdots &    0   & f_1+ X^n\left(f_2+ \cdots +X^n\left(f_{m-1}+X^nf_m\right)\cdots\right) \\
  0      &    -1   &  0 &  \cdots &    0   & f_2+ \cdots +X^n\left(f_{m-1}+X^nf_m\right) \\
  \vdots & \vdots & \vdots & \ddots & \vdots & \vdots   \\
%  0      &    0   & \cdots &    0   & E_{m-3} \\
  0      &    0   &  0   &   \cdots &   0  & f_{m-2}+X^n\left(f_{m-1}+X^nf_m\right) \\
  0      &    0   &  0   &   \cdots &    -1   & f_{m-1}+X^nf_m 
\end{pmatrix} $$
which has the same determinant as $H_m(n,f)$.  
%Denote the upper right-hand $m\times m$ matrix $X^nI_m - C_{n-1}X^{n-1} -  \cdots - C_1 X - C_0$ by $F(X)$. 
We can clean up the last column by adding to it appropriate multiples of the other columns so as to obtain 
$$ 
\det\left(H_m(n,f) \right) = 
\det \begin{pmatrix}
  0    &    0   &  0   &  \cdots &    0   & f(X)\\
  -1      &    0 &  0   &  \cdots &    0   & 0 \\
  0      &    -1   &  0 &  \cdots &    0   & 0 \\
  \vdots & \vdots & \vdots & \ddots & \vdots & \vdots   \\
%  0      &    0   & \cdots &    0   & E_{m-3} \\
  0      &    0   &  0   &   \cdots &   0  & 0 \\
  0      &    0   &  0   &   \cdots &    -1   & 0 
\end{pmatrix}.
$$
Finally, we can slide the last column to the first by successive column interchanges. We need $(m-1)$ interchanges, and so the determinant changes by $(-1)^{(m-1)}$. Further, if we pull out the negative 
sign in each of the rows in all except the first row, then the determinant gets multiplied by $(-1)^{(m-1)}$. It follows that the determinant of $H_m(n,f)$ is 
$(-1)^{2(m-1)}$ times the determinant of the diagonal matrix $\diag\left(f(X), 1, \dots ,1\right)$ and this proves the lemma. %\hfill {$\Box$}%Since $m(m+1)$ is always even, the lemma is proved. 
%we obtain the desired result. 
\end{proof}
%It may be remarked that 
We can construct an efficient MRMM regardless of whether it's reducible, irreducible or primitive. However, in view of cryptographic applications, we shall focus only in the construction of efficient primitive MRMM. %In what follows, we shall see that how one can construct efficient primitive MRMM.

From Proposition \ref{primtsr2}, it is clear that an MRMM \eqref{sigmalfsr} is \emph{primitive} if the characteristic polynomial $\det\left(M(X)\right)$ of its transition matrix $T$ is primitive of 
degree $mn$ over $\Fq$. We shall denote by $\MRMMP(m,n,q)$, the set of all those block companion matrices in $\MRMM(m,n,q)$ whose characteristic polynomial is primitive and by $\EuScript P(d,q)$, 
the set of primitive polynomials
 in $\Fq[X]$ of degree $d$.
 %is denoted by $\MRMMP(m,n,q)$ 
%and $\MRMMP(m,n,q)$ denotes the set of primitive MRMMs.
%and the set of monic irreducible polynomials in $\Fq[X]$ of degree
%$d$ is denoted by $\mathcal I(d,q)$.
% and $\mathcal P(d;q)$ denotes the set of primitive polynomials
 %in \$\Fq[X]\$ of degree \$d\$.
Then %, as indicated in \cite{GSM}, 
the \emph{characteristic map}
$$ \Psi: M_{mn}(\Fq) \longrightarrow \Fq[X]\quad \mbox{defined by}
        \quad\Psi(T):=\det(XI_{mn}-T),$$
if restricted to the set $\MRMMP(m,n,q)$ %and $\TSRP(m,n,q)$
yields the following map
$$\Psi_P:\MRMMP(m,n,q)\longrightarrow \EuScript P(mn,q).$$
By using the structure of Horner's matrix, we prove the surjectivity of the map $\Psi_P$ in the following theorem. %, we shall give a short, yet elementary, proof of the following theorem \cite[Theorem 6.1]{GSM}. 
The proof of this theorem would enable us a way to construct efficient primitive MRMM. 
\begin{theorem} \label{sur}
 The  map $\Psi_P:\MRMMP(m,n,q)\longrightarrow \EuScript P(mn,q)$ is surjective.
\end{theorem}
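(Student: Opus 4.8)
The plan is to run the Horner construction backwards: starting from a prescribed primitive polynomial $f$, I would manufacture a block companion matrix whose matrix polynomial is literally the Horner matrix $H_m(n,f)$, so that Lemma \ref{mntom} together with Lemma \ref{dethmatrix} pins its characteristic polynomial down to $f$. Concretely, given an arbitrary $f \in \EuScript P(mn,q)$, I note that $f$ is monic of degree $mn$ (so $f_m = 1$) with nonzero constant term $a_0 = f(0)$, and I form $H_m(n,f)$ as in \eqref{hmatrix}.

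The key observation is that $H_m(n,f)$ already has the shape required of an MRMM matrix polynomial: by \eqref{mpoly} it reads $H_m(n,f) = I_m X^n + C_{n-1}X^{n-1} + \cdots + C_1 X + C_0$ with constant matrices $C_j \in M_m(\Fq)$ and leading coefficient $I_m$. I would therefore let $M(X) := H_m(n,f)$ be the matrix polynomial of an MRMM of order $n$ over $\Fqm$, reading its defining blocks off \eqref{mpoly}; the one bookkeeping point is that the MRMM matrix polynomial carries minus signs, so its blocks are the negatives of the $C_j$ in \eqref{mpoly}, and this sign flip is exactly what makes the characteristic polynomial come out as $f$ rather than a twisted variant. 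Letting $T \in \MRMM(m,n,q)$ be the associated block companion matrix \eqref{typeP}, Lemma \ref{mntom} gives $\Psi(T) = \det(XI_{mn}-T) = \det M(X) = \det H_m(n,f)$, and Lemma \ref{dethmatrix} evaluates the latter as $f(X)$. Hence $\Psi(T) = f$.

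It remains to certify that $T$ lies in the domain $\MRMMP(m,n,q)$, i.e.\ that the MRMM is primitive, so that $\Psi_P(T)$ is defined and equals $f$. For this I would invoke Proposition \ref{primtsr2}, whose standing hypothesis $C_0 \in \GL_m(\Fq)$ I check directly from the displayed shape of the $C_0$ block: expanding its determinant along the first row leaves a lower triangular minor with $-1$'s on the diagonal, whence $\det(C_0) = \pm a_0 \neq 0$ (equivalently, by \eqref{nonsingP}, $T$ is nonsingular because the constant term $\pm f(0)$ of its characteristic polynomial is nonzero). Since $\det M(X) = f$ is primitive of degree $mn$, Proposition \ref{primtsr2} then forces the MRMM to be primitive, so $T \in \MRMMP(m,n,q)$ and $\Psi_P(T) = f$. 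As $f$ was arbitrary, $\Psi_P$ is surjective.

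I do not expect a genuine obstacle, since the block companion matrix is essentially forced once $H_m(n,f)$ is in hand; the construction is simply the inverse of reading a determinant. The only places demanding attention are the sign matching between \eqref{mpoly} and the MRMM matrix polynomial and the verification that the $C_0$ block is invertible so that Proposition \ref{primtsr2} applies, and both reduce to the short computations indicated above rather than to any real difficulty.
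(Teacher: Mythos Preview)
Your proposal is correct and follows essentially the same route as the paper: build $H_m(n,f)$ from a given $f\in\EuScript P(mn,q)$, read off the associated block companion matrix $\widetilde T\in\MRMM(m,n,q)$, and combine Lemma~\ref{mntom} with Lemma~\ref{dethmatrix} to identify its characteristic polynomial as $f$. Your additional verification that $C_0\in\GL_m(\Fq)$ and appeal to Proposition~\ref{primtsr2} are not strictly needed for surjectivity---membership in $\MRMMP(m,n,q)$ is by definition just the condition that the characteristic polynomial be primitive---and the paper accordingly defers that observation to Remark~\ref{rem1}.
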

\begin{proof}
 Let $f(X)=\displaystyle\sum_{i=0}^{mn}a_iX^{i} \in  \EuScript P(mn,q)$. Clearly, $f$ is a monic polynomial i.e. $a_{mn}=1$. Therefore, as in \eqref{mpoly}, the $n$-Horner's matrix $H_m(n,f)$ of $f(X)$ can be expressed in the following form
 \begin{equation} \label{mpoly2}
 H_m(n,f)=I_mX^n+C_{n-1}X^{n-1}+ \cdots  +C_1X+C_0,
  \end{equation}
where $C_i$ denotes the $m\times m$ matrix whose entries are coefficients of $X^i$ in the Horner's matrix $H_m(n,f)$. Let $\widetilde{T} \in \MRMM(m,n,q)$ denotes the block companion matrix corresponding to 
the matrix polynomial \eqref{mpoly2}. Then by Lemma~\ref{mntom} and Lemma \ref{dethmatrix}, it follows that
$$\Psi_P(\widetilde{T})=\det(XI_{mn}-\widetilde{T})=\det\left(H_m(n,f) \right)=f(X),$$as desired. %\hfill {$\Box$}
%Thus , in view of Proposition \ref{primtsr2}, we have $o(T) = q^{mn}-1$ and $\Psi_P(T)=f(X)$.
%  \begin{equation} 
% f(X)=\displaystyle f_0+X^n\left(f_1+ X^n\left(f_2+ \cdots +X^n\left(f_{m-1}+X^nf_m\right)\cdots\right)\right),
% \end{equation}  
% where $f_i(X)=\displaystyle \sum^{(i+1)n-1}_{k=in}a_kX^{k-in}$ for $i=0,1,\ldots ,(m-1)$ and $f_m=a_{mn}\neq 0$.
\end{proof}
\begin{remark} \label{rem1}
By taking the proof of Theorem \ref{sur} a step further, a short and elementary proof of \cite[Theorem 6.1]{GSM} follows immediately. In fact, it is easy to see that $\det \left(C_0\right)=\pm a_0$. 
Since $f(X)$ is primitive, we have $a_0 \neq 0$ and hence $C_0 \in \GL_m(\Fq)$. Thus in view of \eqref{nonsingP}, $\widetilde{T}\in \GL_{mn}(\Fq)$. Moreover, since characteristic polynomial $f(X)$ of $\widetilde{T}$ is 
primitive, it follows from Proposition \ref{primtsr2} that $o(\widetilde{T}) =~q^{mn}-1$. %Now following the notation of \cite{GSM}, we have proved that $T \in~\bcms$ and $\Psi(T)=f(X)$. 
\end{remark}
\begin{remark}
It may also be interesting to note that a short and elementary proof of \cite[Proposition~2.2]{GR} can be derived
by considering $f(X)$ to be irreducible and following the similar lines as in the proof of Theorem \ref{sur}. %and Remark~\ref{rem1}. 
\end{remark}
\section{The Algorithm} %for Constructing Efficient Primitive MRMM}
\label{algorithm}

%We shall restrict ourselves to only binary fields and their extensions in this section as well as in Section \ref{implementation}. However, all the results hold true over an arbitrary finite field. 

% In this section as well as in Section \ref{implementation}, we shall restrict ourselves to only binary fields and their extensions. However, all the results can be emulated over an arbitrary finite field. 
%Based on the theory developed in the previous section, 
In this section, we present an algorithm to find an efficient primitive MRMM of order $n$ over $\mathbb F_{q^m}$. In view of the proof of Theorem \ref{sur}, we shall begin by first finding a primitive polynomial $f(X)$ of degree $mn$ over $\mathbb F_q$ so as to obtain a primitive MRMM of order $n$ over $\mathbb F_{q^m}$. 

It may be remarked that for checking primitivity of a polynomial of degree $mn$ over $\Fq$, one needs to know the distinct prime factors of $q^{mn}-1$ beforehand. The computational complexity of finding distinct prime factors of $q^{mn}-1$ is very large. In fact, the factors of $q^{mn}-1$ can not be computed in polynomial time in general. However, for smaller values of $q$ (note that in most of applications $q$ is 2), many thanks to the Cunningham project \cite{BLSTS,W}, the factorization of $q^{mn}-1$ is known for reasonable values of $mn$ that are needed in most of practical applications. Our algorithm is based on the assumption that the distinct prime factors of $q^{mn}-1$ are already known. %The efficiency of the primitive MRMM obtained through our algorithm is proved in Section \ref{implementation}.  
All the sequential steps are described in Algorithm ~\ref{Algo1}.
\begin{algorithm}
 \caption{: Finding an efficient primitive MRMM} \label{Algo1}
     \begin{algorithmic}[1]
     \REQUIRE Positive integers $m$ and $n$, the prime power $q$, the distinct prime factors $p_1, p_2, \cdots, p_k$ of $q^{mn}-1$.
     \ENSURE An efficient primitive MRMM of order $n$ over $\mathbb F_{q^m}$.%\vspace{.2cm}
     \STATE Choose at random a monic polynomial $f\in \Fq[X]$ of degree $mn$. 
     This is done by randomly selecting $mn$ elements $a_0, a_1, \cdots, a_{mn-1}$ in $\Fq$ with $a_0 \neq 0$ and taking $f(X)=X^{mn}+a_{mn-1}X^{mn-1}+ \dots +a_1 X+a_0$.\label{step1}
     \STATE Verify if $f$ is irreducible. If $f$ is not irreducible then go to Step \ref{step1}, otherwise go to Step \ref{step3}. \label{step2}
     \STATE Verify if $f$ is primitive. If $f$ is not primitive then go to Step \ref{step1}, otherwise go to Step \ref{step4}. The primitivity test is done as follows: Compute $\displaystyle h(X)=X^{(q^{mn}-1)/p_i}$ mod $f(X)$ for each $i$. If $h(X) \neq 1$ %$f$ does not divide $\displaystyle X^{(q^{mn}-1)/p_i}-1$ 
     for all $k$ distinct prime factors $p_i$ then $f$ is primitive. \label{step3}
% % %      \begin{itemize}
% % %       \item For $i$ from 1 to $k$ do the following:
% % %       Compute $\ell(X)=X^{(q^{mn}-1)/(p_i)}$ mod $f(X)$\\
% % %       If $\ell(X)=1$ then return (``not primitive'').
% %       \item Return (``Primitive''). 
%      \end{itemize}
    \STATE Express $f(X)$ in its $n$-Horner's form. \label{step4}
    \STATE Construct $n$-Horner's matrix $H_m(n,f)$ of $f(X)$.
    \STATE Express $H_m(n,f)$ in the form of matrix polynomial as described in \eqref{mpoly}\label{step}, i.e.,
    $$ H_m(n,f)=I_mX^n+C_{n-1}X^{n-1}+ \cdots  +C_1X+C_0,$$
    where $C_i$ denotes the $m\times m$ matrix whose entries are coefficients of $X^i$ in the Horner's matrix $H_m(n,f)$.
    \STATE Return $C_0, C_1, \cdots, C_{n-1}$. 
  \end{algorithmic}
\end{algorithm}

%\begin{remark}
%It is clear that the total cost of the Algorithm \ref{Algo1} shall depend only on Steps~\ref{step2} and \ref{step3}. 
In Step~\ref{step2} of the algorithm, one may use Ben-Or's algorithm \cite{BO,GP} for irreducibility test, which is quite efficient in practice. It is pointed out in \cite{GP} that by using fast multiplication \cite{CK,S,SS}, the worst case complexity of Ben-Or's algorithm is 
$O(m^2n^2 \log mn \log\log mn \log mnq)$. % and its average complexity is $O(n\log^2 n) \log \log n \log nq$
As noted in \cite[Section 1]{GVPS}, in polynomial basis representation of $\mathbb F_{q^{mn}}$ over $\Fq$, the exponentiation can be done with $O(m^2n^2 \log mn \log \log mn \log q)$ operations in $\Fq$, with fast multiplication and repeated squaring. Thus the cost of Step~\ref{step3} is $O(km^2n^2 \log mn \log \log mn \log q)$. Let $\alpha$ denote the probability that a given random monic polynomial of degree $mn$ be primitive. Since the number of primitive polynomials of degree $mn$ over $\Fq$ is ${\phi(q^{mn}-1)}/{mn}$, where $\phi$ is Euler's totient function. The value of $\alpha$ is given by $\phi(q^{mn}-1)/(mnq^{mn})$. It is clear that the expected number of times the Algorithm~\ref{Algo1} is iterated to find a primitive MRMM is $1/\alpha$. So the expected number of times Step \ref{step2} to be executed is $1/\alpha$. It is well-known that the probability of a random monic polynomial of degree $mn$ in $\Fq[X]$ being irreducible over $\Fq$ is close to $1/mn$. So the expected number of times Step~\ref{step3} to be executed is $1/(mn \alpha)$. Thus the expected run time of Algorithm~\ref{Algo1} is ${\alpha}^{-1} O(m^2n^2 \log mn \log\log mn \log mnq) + {(mn \alpha)}^{-1}O(km^2n^2 \log mn \log \log mn \log q)$, which, after simplification, can be seen to be equal to $O({\alpha^{-1} mn \log mn \log\log mn}(mn \log mnq + k \log q))$. In view of the fact that for a given number $N$, the number of distinct prime factors of $N$ is asymptotically $\log \log N$ \cite[p. 51]{Hardy}, we can simply omit the second term ``$k\log q$'' inside the big Oh notation. As a consequence, the expected run time of Algorithm~\ref{Algo1} is $\displaystyle O(\alpha^{-1} m^2n^2 \log mn \log\log mn \log mnq)$. Further, by using the well-known lower bound on Euler's totient function due to Landau \cite[Theorem~3.4.2]{JCL} (see also \cite[Fact~2.102]{HAC}), it follows that the expected run time of Algorithm~\ref{Algo1} is given by $\displaystyle O(m^3n^3 \log \log q^{mn}\log mn \log\log mn \log mnq)$.

\begin{remark}
 It is clear that our algorithm finds an efficient primitive MRMM of order $n$ over $\Fqm$ only for small values of $mn$ for which the factorization of $q^{mn}-1$ is known. For large values of $mn$, it would be computationally infeasible to generate all primitive polynomials of degree $mn$ over $\Fq$ due to the rapid growth of the Euler's totient function. For instance, the number of primitive polynomials of degree $100$ over $\mathbb F_2$ is already $5.70767634 \times 10^{27}$. %In 2007, it was estimated that all the empty or usable space on hard drives, tapes, CDs, DVDs, and memory (volatile and nonvolatile) in the market equaled 264 exabytes \cite{HL}. Thus 
Even at the conservative estimate of three bytes per polynomial, this would exceed the total amount of data stored digitally in the world, which was estimated to be 264 exabytes in 2007 \cite{HL}. On the other hand, primitive polynomials of large degree over $\mathbb F_2$ are known; see, for example, some recent papers due to Brent and Zimmerman \cite{BZ1,BZ2,BZ3}. Thus if one intends to use above algorithm to generate primitive MRMM corresponding to all primitive polynomials of degree $mn$ (for small values of $mn$), then it is not an efficient way to do so. In fact, 
there are 
faster algorithms; see, for example, an algorithm due to Porto, Guida and Montolivo \cite{PGM}, which generate all primitive polynomials of degree $D$ given a single primitive polynomial of degree $D$.
\end{remark}

\begin{example}
Let us consider the same polynomial $f(X)$ as given in Example \ref{exam1}. One can verify that $f(X)$ is a primitive over $\mathbb F_2$. The $3$-Horner's matrix of $f(X)$ is given by
\begin{center}
\[H_4(3,f)=\left(%
\begin{array}{cccc}
  X^3 & 0 & 0 & (1+X^2) \\
  1 & X^3 & 0 & X^2 \\
  0 & 1 & X^3 & X\\
  0 & 0 & 1 & (X^3+X^2+X)
\end{array}%
\right).\]
\end{center}
We can express $H_4(3,f)$ in the form of a following matrix polynomial 
$$H_4(3,f)=I_3X^3+C_2X^2+C_1x+C_0,$$ where 
$ C_0=\begin{pmatrix}
  0 & 0 & 0 & 1 \\
  1 & 0 & 0 & 0 \\
  0 & 1 & 0 & 0 \\
  0 & 0 & 1 & 0 
\end{pmatrix}$, 
$C_1=\begin{pmatrix}
  0 & 0 & 0 & 0 \\
  0 & 0 & 0 & 0 \\
  0 & 0 & 0 & 1 \\
  0 & 0 & 0 & 1 
\end{pmatrix}$, and $
C_2= \begin{pmatrix}
  0 & 0 & 0 & 1 \\
  0 & 0 & 0 & 1 \\
  0 & 0 & 0 & 0 \\
  0 & 0 & 0 & 1 
\end{pmatrix}
$.
It is clear that $C_0 \in \GL_4(\mathbb F_2)$. If $\widetilde{T}$ denotes the block companion matrix corresponding to the matrix polynomial, that is, 
$ \widetilde{T} =\begin{pmatrix}
\mathbf{0}&\mathbf{0} &C_0\\
I_3 & \mathbf{0} & C_1\\
\mathbf{0} & I_3 & C_2
\end{pmatrix},
$
then $\widetilde{T} \in \GL_{12}(\mathbb F_2)$ and $o(\widetilde{T})=2^{12}-1$. Moreover, $\Psi_P(\widetilde{T})=f(X)$.
Now corresponding to these $C_0, C_1,$ and $C_2$, we can associate a primitive MRMM of order $3$ over $\mathbb F_{2^4}$.\end{example}
\section{Efficient Implementation}\label{implementation}
In this section, %as well as in Section \ref{implementation}
we shall restrict ourselves to only binary fields and their extensions. However, all the results can be emulated over an arbitrary finite field. As pointed out in Section \ref{construction}, the efficiency of MRMM constructed through Algorithm~\ref{Algo1} is due to the special structure of the matrices $C_0, C_1, \dots, C_{n-1}$. 
It was also noted in Section~\ref{construction} that the first $(m-1)$ columns of matrix $C_j$ $(1\leq j\leq n-1)$ are zero. Moreover, the matrix $C_0$ has a special 
structure. In fact, it is easy to see that $C_0=R+\widehat{C_0}$, where $R$ is right shift operator given by the matrix

$$R=\begin{pmatrix}
 0 & 0 & \cdots & 0 & 0 \\
  1 & 0 & \cdots & 0 & 0 \\
  0 & 1 & \cdots & 0 & 0 \\
  \vdots & \vdots & \ddots & \vdots & \vdots \\
  0 & 0 & \cdots & 1 & 0
\end{pmatrix}_{ m\times m}
 ,$$ 
 and $\widehat{C_0}$ has all its columns zero except the $m^{\rm th}$ column, which is essentially the last column of $C_0$. The structure of $\widehat{C_0}$ is exactly same as $C_j$, $j\geq 1$.
 
 The following lemma makes implementation of MRMM fast and efficient. 
 \begin{lemma}\label{thm:efficient_lemma}
 For any matrix $A \in M_m(\mathbb{F}_2)$ having all the columns zero except the $m^{\rm th}$ 
column and for any vector ${\mathbf s} = [s_0, s_1, \ldots, s_{m-1}]^{\rm tr} \in \mathbb{F}_2^m$, we have
\[
  A {\mathbf s} =  s_{m-1} {\mathbf v}_m
\]
where ${\mathbf v}_m$ represents the $m^{\rm th}$ column of the matrix $A$.% and $[s_0, s_1, \ldots, s_{m-1}]^{\rm tr}$ represents transpose of the vector $[s_0, s_1, \ldots, s_{m-1}]$ .
\end{lemma}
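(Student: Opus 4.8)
The plan is to prove the identity directly from the column interpretation of matrix--vector multiplication, since the hypothesis on $A$ forces all but one term of the resulting linear combination to vanish. No structural result beyond the definition of the product $A\mathbf{s}$ is needed.

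First I would write $A$ in terms of its columns, say $A = [\,\mathbf{a}_1 \mid \mathbf{a}_2 \mid \cdots \mid \mathbf{a}_m\,]$, where each $\mathbf{a}_j \in \mathbb{F}_2^m$. The standard fact that a matrix applied to a vector returns the linear combination of the columns weighted by the coordinates of the vector then gives
\[
  A\mathbf{s} = \sum_{j=1}^{m} s_{j-1}\,\mathbf{a}_j,
\]
where the offset $s_{j-1}$ against $\mathbf{a}_j$ reflects that the vector is indexed $s_0,\ldots,s_{m-1}$ while the columns are indexed $1,\ldots,m$. Keeping this index shift consistent is the only point requiring any care, and it is the single place where a slip could occur; there is no genuine analytic or combinatorial obstacle here.

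Next I would invoke the hypothesis that every column of $A$ except the $m^{\rm th}$ is the zero vector, that is $\mathbf{a}_1 = \cdots = \mathbf{a}_{m-1} = \mathbf{0}$ and $\mathbf{a}_m = \mathbf{v}_m$. Substituting these into the sum kills every summand with $j < m$, leaving only the term indexed by $j = m$, namely $s_{m-1}\,\mathbf{v}_m$. This yields
\[
  A\mathbf{s} = s_{m-1}\,\mathbf{v}_m,
\]
which is exactly the claimed identity, completing the argument. I would remark in closing that the statement is phrased over $\mathbb{F}_2$ to match the implementation discussion, but the same computation holds verbatim over any field, as nothing in the derivation uses $2$-torsion; this mirrors the paper's earlier observation that the binary results extend to arbitrary finite fields.
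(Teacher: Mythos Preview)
Your proof is correct and is exactly the direct verification the paper has in mind; the paper's own proof reads in its entirety ``Proof is obvious.'' Your column-expansion argument is the natural way to unpack that one-word justification, and your closing remark about the field playing no role is accurate as well.
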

\begin{proof}
Proof is obvious.%\hfill {$\Box$}
\end{proof}
By invoking Lemma \ref{thm:efficient_lemma}, the recurrence relation \eqref{sigmalfsr} can be written as follows:
\begin{eqnarray}\label{eq:efficient_form}
 %{\mathbf s}_{i+n} &=& C_0{\mathbf s}_i + C_1{\mathbf s}_{i+1} + \cdots + C_{n-1}{\mathbf s}_{i+n-1}\nonumber \\
        {\mathbf s}_{i+n} & =&  R{\mathbf s}_i + \widehat{C_0}{\mathbf s}_i + C_1{\mathbf s}_{i+1} + \cdots + C_{n-1}{\mathbf s}_{i+n-1}\nonumber \\
         &=& R{\mathbf s}_i + s_0 {{\mathbf v}{}_m^{0}} + s_1 {\mathbf v}_m^1 + \cdots + s_{n-1} {\mathbf v}_{m}^{n-1}, \label{eqeffi}
\end{eqnarray}
where $s_i$ is the least significant bit (LSB) of ${\mathbf s}_i$, ${\mathbf v}_m^i$ is the $m^{\rm th}$ 
column of the matrix $C_i (0\leq i\leq n-1)$. 

It is clear that \eqref{eqeffi} can be computed by using only one right shift operation and 
at most $n$ bitwise XOR operations instead of matrix multiplications and thus, provides an efficient software realization. 
\section{Tweaked Multiple-Recursive Matrix Method}
\label{nmrmm}
As we know that in \cite{HPW}, a tweak based on Langford arrangement was introduced for the sequences generated by TSRs. In this section, however, we shall consider a slightly different tweak, but based on Langford arrangement itself for the sequences generated by the multiple-recursive matrix method along the similar lines. %based on Langford arrangement. 
%This design is basically motivated by an idea of Groth \cite{GROTH} of deriving 
%sequences with high linear complexity and appears to be a practical scheme. The results of this section for MRMM are analogous to the results that appear in \cite{HPW} for TSRs. However, we thought of including these 
%results for the sake of bringing them to the attention of larger audience.

We recall the definition of Langford arrangement \cite{Langford} of a sequence of numbers, which is an important object of study in combinatorics.
 \begin{definition} 
Arrange the numbers $11223344 \cdots gg$ in a sequence such that 
between equal numbers $h$ there are exactly $h$ other numbers. 
This type of arrangement of numbers is known as a \emph{Langford 
arrangement}.
\end{definition}
\begin{example}
For $g=4$ and $g=8$, the Langford arrangements are $41312432$ 
and $6751814657342832$, respectively. 
\end{example}
%Using Groth's \cite{GROTH} idea, 
We define the notion of tweaked primitive multiple-recursive matrix method based on Langford arrangement as follows. 
\begin{definition}
Let $\mathbf{s}_i=
\left(s_i^{(1)},\dots,s_{i}^{(m)}\right) \in \Fq^m \simeq \Fqm$, 
$i=0,1,\dots,$ be the sequence over $\Fqm$ generated by a primitive 
MRMM of order $2g$, where $g$ is a positive integer. Suppose there exists a Langford arrangement for 
the number $g$, and let $\ell_k$ and $r_k$, respectively, denote 
the left and right positions of the number $k$ in the Langford 
arrangement of $g$ from the left. Then $r_k=\ell_k+k+1$. We 
define a sequence $\mathbf{t}^{\infty}=\mathbf{t}_0, \mathbf{t}_1, \dots$ over $\Fqm$ 
obtained from $(\mathbf{s}_i)_{i=0}^{\infty}$ by the following 
recurrence relation:
\begin{eqnarray}
 \mathbf{t}_i=\displaystyle \sum_{j=0}^{i} \mathbf{u}_j \quad \mbox{for} \quad i=0,1, \dots \label{nltsr}\\
 \mbox{where} \quad \mathbf{u}_j=\displaystyle \sum_{k=1}^{g} \mathbf{s}_{2g+j-\ell_k}\star \mathbf{s}_{2g+j-r_k}. \label{nltsr1}
\end{eqnarray}
%$\mbox{where} \quad \mathbf{u}_j=\displaystyle \sum_{k=1}^{g} \mathbf{s}_{2g+j-\ell_k}\star \mathbf{s}_{2g+j-r_k}.$
The operation $\star$ denotes the component-wise multiplication of the 
vectors defined as 
$\mathbf{s}_{2g+i-\ell_k} \star \mathbf{s}_{2g+i-r_k} 
   = \left(s_{2g+i-\ell_k}^{(1)}s_{2g+i-r_k}^{(1)}, \dots, 
     s_{2g+i-\ell_k}^{(m)}s_{2g+i-r_k}^{(m)}\right)$ and $\sum$ denotes the component-wise addition of the vectors.

The system \eqref {nltsr} is called the \emph{tweaked primitive MRMM based on a Langford arrangement
} 
of order $2g$ over $\mathbb F_{q^m}$, while the sequence 
$(\mathbf{t}_i)_{i=0}^{\infty}$ is referred to as the \emph{sequence generated by the tweaked primitive MRMM based on a Langford arrangement}.
\end{definition}
\begin{example}
We consider the Langford arrangement for the number $g=4$ given by $41312432$. In this case, the values of $\ell_i$'s and $r_i$'s are: $\ell_1=2, r_1=4, \ell_2=5, r_2=8, \ell_3=3, r_3=7, \ell_4=1, r_4=6.$

Let $(\mathbf{s}_i)_{i=0}^{\infty}$ be a sequence generated by a primitive MRMM of length $8$ over $\mathbb{F}_{2^m}$. Then the sequence$(\mathbf{t}_i)_{i=0}^{\infty}$ generated by a tweaked primitive 
MRMM based on the above Langford arrangement of length $8$ over $\mathbb{F}_{2^m}$ is given by
$$\mathbf{t}_0=\mathbf{u}_0, \mathbf{t}_1=\mathbf{u}_0 + \mathbf{u}_1, \mbox{and so on},$$ where
$\mathbf{u}_0=\mathbf{s}_{6}\star \mathbf{s}_{4}+\mathbf{s}_{3} \star \mathbf{s}_{0}+\mathbf{s}_{5}\star\mathbf{s}_{1}
 +\mathbf{s}_{7}\star\mathbf{s}_{2},$ 
$\mathbf{u}_1=\mathbf{s}_{7}\star\mathbf{s}_{5}+\mathbf{s}_{4}\star\mathbf{s}_{1}+\mathbf{s}_{6}\star\mathbf{s}_{2}
 +\mathbf{s}_{8}\star \mathbf{s}_{3}$, and so on. %Here $\&$ is usual AND operation and $\oplus$ is usual XOR operation. 
\end{example}
%The following theorem is similar to a result proved in \cite{HPW} for TSRs and 
The following theorem gives the component-wise 
linear complexity of the auxiliary sequence $(\mathbf{u}_i)_{i=0}^{\infty}$ as defined in \eqref{nltsr1}.% We include a proof of this theorem for the sake of completeness. 
\begin{theorem}
\label{CWLCofNLTSR}
Let $$ \mathbf{u}_i= \left(u_i^{(1)},\dots,u_{i}^{(m)}\right) 
\in \Fq^m \simeq \Fqm, i=0,1,\dots,$$ be a sequence as defined in \eqref {nltsr1}. Then for each $1\leq j \leq m$, the linear 
complexity of the $j^{\rm th}$ coordinate sequence 
$u_0^{(j)}, u_1^{(j)}, \dots$ is given by $\frac{mn (mn+1)}{2}$.
\end{theorem}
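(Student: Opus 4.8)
The plan is to fix a coordinate index $j$, reduce the statement to a question about a single $q$-ary $m$-sequence, and then read off the linear complexity from the trace (exponential-sum) representation of that sequence. First I would record that, by Lemma~\ref{CWLC} together with Corollary~\ref{compLC} and Proposition~\ref{primtsr2}, the $j$-th coordinate sequence $a:=(s_i^{(j)})_{i\ge 0}$ of the underlying primitive MRMM of order $n:=2g$ is a linear recurring sequence whose minimal polynomial is the primitive characteristic polynomial $h$ of $T$, of degree $D:=mn$. Fixing a root $\theta\in\Fqmn$ of $h$, which is a primitive element of $\Fqmn$, the theory of linear recurring sequences with irreducible minimal polynomial yields a trace representation $a_i=\operatorname{Tr}(\beta\theta^i)$ for some $\beta\in\Fqmn^{*}$, where $\operatorname{Tr}$ is the trace from $\Fqmn$ to $\Fq$. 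Since $\star$ and $\sum$ act coordinatewise, the $j$-th coordinate of $\mathbf{u}_i$ depends only on $a$, and on setting $p_k:=2g-\ell_k$, $q_k:=2g-r_k$ we obtain $u_i^{(j)}=\sum_{k=1}^{g}a_{i+p_k}a_{i+q_k}=G(\theta^i)$, where $G(x):=\sum_{k=1}^{g}\operatorname{Tr}(\beta\theta^{p_k}x)\operatorname{Tr}(\beta\theta^{q_k}x)$ is a fixed $\Fq$-valued function on $\Fqmn$. The Langford data enter here only through $p_k-q_k=r_k-\ell_k=k+1$ and through $\{p_k,q_k\}_{k}=\{0,1,\dots,2g-1\}$.

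Next I would put $G$ into canonical polynomial form. Using $\operatorname{Tr}(x)\operatorname{Tr}(y)=\sum_{t=0}^{D-1}\operatorname{Tr}(xy^{q^t})$, one rewrites $G(x)=\sum_{t=0}^{D-1}\operatorname{Tr}(S_t\,x^{1+q^t})$ with $S_t=\sum_{k}(\beta\theta^{p_k})(\beta\theta^{q_k})^{q^t}$; equivalently, expanding both traces, $G(x)=\sum_{0\le s,t\le D-1}\big(\sum_k\beta^{q^s+q^t}\theta^{q^sp_k+q^tq_k}\big)x^{q^s+q^t}$. The purpose of this step is that every exponent occurring in the reduced polynomial of $G$ has $q$-ary weight at most $2$: there are $D$ admissible weight-one exponents (forming one $q$-cyclotomic coset) and $\binom{D}{2}$ weight-two exponents, so the number of admissible exponents is $D+\binom{D}{2}=\binom{D+1}{2}=\frac{mn(mn+1)}{2}$. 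Because $\theta$ is primitive, the linear complexity of $(G(\theta^i))_i$ equals the number of distinct exponents whose coefficient is nonzero; the Frobenius relation forces the coefficients inside each cyclotomic coset to vanish or survive together, so the linear complexity is at most $\binom{D+1}{2}$, with equality exactly when no cyclotomic coset is absent.

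The heart of the proof, and the step I expect to be the main obstacle, is to show that none of these coefficients vanishes, so that the upper bound $\binom{D+1}{2}$ is attained. Collecting terms, the weight-one coset is present iff $\sum_k\theta^{p_k+q_k}\neq 0$, and the weight-two coset of cyclic gap $d$ is present iff $\sum_k\big(\theta^{p_k+q^dq_k}+\theta^{q^dp_k+q_k}\big)\neq 0$, which (using $p_k-q_k=k+1$) factors as $\sum_k\theta^{(1+q^d)q_k}\big(\theta^{k+1}+\theta^{q^d(k+1)}\big)\neq 0$. For the weight-one coset, when $m\ge 2$ one has $\max_k(p_k+q_k)\le 4g-3<mn$, so $\sum_k\theta^{p_k+q_k}$ is an $\Fq$-combination of the basis $1,\theta,\dots,\theta^{mn-1}$ of $\Fqmn$, and distinctness of the surviving exponents forces it to be nonzero. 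The genuinely delicate case is the weight-two cosets: the large exponents $q^dq_k$ put us outside the range where this low-degree linear-independence argument applies. Here I would exploit the pairwise-distinct differences $k+1$ (which run through $2,3,\dots,g+1$) together with the additive structure $\theta^{k+1}+\theta^{q^d(k+1)}=L_d(\theta^{k+1})$, $L_d(y)=y+y^{q^d}$, to preclude systematic cancellation among the $g$ summands, paralleling the treatment of primitive TSRs in \cite{HPW}.

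Finally, once every cyclotomic coset is shown to be present, the count of the second paragraph gives linear complexity exactly $D+\binom{D}{2}=\frac{mn(mn+1)}{2}$ for each coordinate $j$, which is the assertion. I would expect the written proof to spend nearly all its effort on the weight-two nonvanishing and to isolate the Langford input as a short combinatorial lemma guaranteeing that the distinctness of the differences $k+1$ prevents the $g$ quadratic contributions from cancelling in any cyclotomic coset.
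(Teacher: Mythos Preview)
Your proposal takes a genuinely different and far more elaborate route than the paper. The paper's proof is essentially one sentence: after observing via Corollary~\ref{compLC} that each coordinate sequence $(s_i^{(j)})_i$ is an $m$-sequence of degree $mn$ (i.e.\ a sequence generated by a primitive LFSR of order $mn$), it invokes \cite[Section~III]{KEY} directly to conclude that the quadratic combination $u_i^{(j)}=\sum_{k} s^{(j)}_{2g+i-\ell_k}\,s^{(j)}_{2g+i-r_k}$ has linear complexity $\frac{mn(mn+1)}{2}$. The Langford condition (pairwise distinct spacings $r_k-\ell_k=k+1$) is precisely the hypothesis Key's analysis needs, so the paper delegates the entire computation to that reference and writes nothing further.

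What you are outlining is, in effect, a re-proof of Key's result from first principles via the trace representation. Your upper-bound step --- every exponent in the reduced polynomial of $G$ has $q$-weight at most $2$, so at most $\binom{mn+1}{2}$ exponents contribute --- is correct and is the easy half. The nonvanishing half, which you correctly flag as the crux, remains a plan rather than an argument: for the weight-two cosets you only say you ``would exploit'' the distinct differences, and even your weight-one argument tacitly assumes the integers $p_k+q_k=4g-2\ell_k-k-1$ are pairwise distinct, which is not an immediate consequence of the Langford axioms (it amounts to injectivity of $k\mapsto 2\ell_k+k$). Your expectation that ``the written proof spends nearly all its effort on the weight-two nonvanishing'' is therefore off: the paper spends none, because it treats Key's theorem as a black box. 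If you want a self-contained argument you will indeed have to carry out the cyclotomic-coset nonvanishing in full, but that is reproving \cite{KEY} rather than the present theorem.
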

\begin{proof}
For each $1\leq j \leq m$, it follows from \eqref{nltsr1} that 
$$ u_i^{(j)}=\displaystyle \sum_{k=1}^{g}s_{2g+i-\ell_k}^{(j)}s_{2g+i-r_k}^{(j)}, 
  \quad i=0,1,\dots.$$
The Corollary \ref{compLC} ensures that the linear complexity of the 
component sequences $s_i^{(j)}$, $i=0,1, \dots,$ is $mn$. Now $s_i^{(j)}$, $i=0,1, \dots,$ can be thought of as a sequence generated by a primitive LFSR of order $mn$ and thus it follows from \cite[Section III]{KEY} that 
the linear complexity of the sequence 
$({u}_i^{(j)})_{i=0}^{\infty}$ is $\frac{mn (mn+1)}{2}$.   
%\hfill {$\Box$}
\end{proof}
In view of Theorem \ref{CWLCofNLTSR}, the component-wise linear complexity of the sequence $(\mathbf{t}_i)_{i=0}^{\infty}$ generated by tweaked primitive MRMM based on Langford arrangement of order $2g$ over $\Fqm$ is 
of the order of $mn(mn+1)/2$, which is $(mn+1)/2$ times more than that of the sequences generated by the usual primitive MRMM. %(compare with Corollary \ref{compLC}). 
 \section*{Acknowledgments}
 We would like to thank anonymous referees for their several insightful comments that have significantly improved the quality of our manuscript and for drawing our attention to \cite{HML,JCL}. We would like to thank Samrith Ram for his careful reading of the initial draft of our manuscript and for his many useful discussions, particularly, in Section \ref{algorithm}. We would also like to thank Shri T. S. Raghavan for his valuable guidance and support.

% The first two authors would like to thank Shri T. S. Raghavan for his valuable guidance and support. 
% We are also grateful to Director CAIR and Director SAG for their permission to publish this paper.

\end{document}